\newcommand{\C}{{\mathbb C}}
\newcommand{\N}{{\mathbb N}}
\newcommand{\one}{\mathbbm{1}}
\newcommand{\cG}{{\mathcal G}}
\newcommand{\cL}{{\mathcal L}}
\newcommand{\cH}{{\mathcal H}}
\def\G{{\cal G}_{\Gamma}}
\newcommand{\be}{\begin{equation}}
\newcommand{\ee}{\end{equation}}
\newcommand{\beq}{\begin{eqnarray}}
\newcommand{\eeq}{\end{eqnarray}}
\newcommand{\bea}{\begin{eqnarray}}
\newcommand{\eea}{\end{eqnarray}}
\newcommand{\nn}{\nonumber}
\newcommand{\bra}{\langle}
\newcommand{\ket}{\rangle}
\newcommand{\rd}{\mathrm{d}}
\newcommand{\bpm}{\begin{pmatrix}}
\newcommand{\epm}{\end{pmatrix}}
\newcommand{\bvm}{\begin{vmatrix}}
\newcommand{\evm}{\end{vmatrix}}
\def\nn{\nonumber}
\newtheorem{theorem}{Theorem}[section]
\newtheorem{lemma}[theorem]{Lemma}
\newtheorem{definition}[theorem]{Definition}
\begin{document}

\title{Ising Model from Intertwiners}

\author{{\bf Bianca Dittrich}\email{bdittrich@perimeterinstitute.ca}}
\affiliation{ Perimeter Institute for Theoretical Physics,
Waterloo, Ontario, Canada.}
\affiliation{Department of Physics, University of Waterloo, Waterloo, Ontario, Canada}

\author{{\bf Jeff Hnybida}\email{jhnybida@perimeterinsititute.ca}}
\affiliation{ Perimeter Institute for Theoretical Physics,
Waterloo, Ontario, Canada.}
\affiliation{Department of Physics, University of Waterloo, Waterloo, Ontario, Canada}

\begin{abstract}
Spin networks appear in a number of areas, for instance in lattice gauge theories and in quantum gravity. They describe the contraction of intertwiners according to the underlying network.  We show how a certain generating function of intertwiner contractions for arbitrary networks, when restricted to a square lattice is exactly related to the high temperature expansion of the 2d Ising model partition function with constant couplings.  This implies that the intertwiner model possesses a second order phase transition, thus leading to a continuum limit with propagating degrees of freedom.
\end{abstract}

\maketitle


\section{Introduction}

Spin networks are combinatorial objects which are used in Lattice Gauge theories, Condensed matter systems, Topological Quantum Field theories, as well as models of Quantum Gravity.  They are defined simply by a directed graph $\Gamma$ decorated with spins $j_e$ i.e.\  labels for irreducible representations, on the edges and intertwiners $i_v$ on the vertices corresponding to a compact group $G$.  An $n$ valent node of $\Gamma$ is assigned an $n$-valent intertwiner which is nothing but an invariant rank $n$ tensor on the group $G$.  For a historical overview see \cite{Smolin:1997aa}.

For the choice $G=SU(2)$, the case of interest for Quantum Gravity, a coherent representation is given in terms of spinors.  In this representation $n$-valent intertwiners are labeled by $n$ spinors \cite{coh1, Freidel:2010aq,Livine:2011zz} and their contraction defines so--called coherent spin network amplitudes.  Coherent states, in general, have a special exponentiating property, which was used by Freidel and one of the authors, to construct a generating function for these intertwiner contractions \cite{Freidel:2012ji}.  

In \cite{Freidel:2012ji} two generating functions were constructed: one for the contraction of coherent intertwiners (introduced in \cite{coh1}) and one for the contraction of a new basis of intertwiners which were studied further in \cite{Freidel:2013fia}.  The first generating function was found to be expressed as the inverse square of a sum over terms in 1-1 correspondence with loops of the graph which don't share vertices or edges.  The second generating function, on the other hand, was found to be a generalization of the first in which loops were allowed to share vertices but not edges.  These generating functions were also studied in \cite{Bonzom:2012bn}.

The expression of the generating function in terms of sums over loops of the graph is reminiscent of the high temperature expansion of the Ising model.  Indeed, the 2d Ising model is defined by the configurations of spins on a 2d square lattice which can take one of two orientations.  In the high temperature expansion, the various configurations of spins can be described by loops on the dual lattice corresponding to the boundaries between domains of the two different orientations. 

By choosing the weights and graph orientations of the spin network generating function appropriately we show that one can reproduce exactly this high temperature expansion of the 2d Ising model.  The benefit of making this connection is that the 2d Ising model is exactly solvable.  Moreover, it allows us to identify a phase transition in the statistical model corresponding to the spin network generating function.

There has also been a considerable amount of work on generating functions for other spin network amplitudes, such as the Penrose evaluation \cite{penrose1971applications}.  Traditionally these amplitudes have been defined for trivalent graphs since the space of trivalent intertwiners is one dimensional, while the space of $n$-valent intertwiners, with $n>3$,  is non-trivial, but possesses bases constructed from trivalent trees.  Here the edges of the graph are labeled by the irreducible representations of SU(2) $j_e = \N/2$ and the trivalent intertwiners are unique, but possess an ordering of the edges at each vertex. 

  The amplitude associated with a trivalent spin network, referred to as the Penrose evaluation \cite{penrose1971applications}, is computed by replacing each edge with $2j_e$ strands and connecting all the strands at each node.  There are many way to do this which we will call routings, and each routing results in $N$ closed loops.\footnote{Note that if not all of the strands at a vertex can be matched then the amplitude vanishes.  This can happen if the sum of spins at the vertex is a half integer, or if the sum of two spins is less than the third.} The amplitude is then defined by
\be \label{eqn_spin_net_eval}
  A_\Gamma\big(\{j_e\}\big) \equiv  \sum_{\text{routings}} \epsilon(-2)^N
\ee  
 where $\epsilon$ is a sign which is defined such that two routings which differ by a crossing of strands have opposite sign.  Different amplitudes usually differ from this one just by a sign and normalization factor.   

Various other generating functions for the evaluations of spin networks have been developed, which can also be expressed in terms of loops.  This began with Schwinger's generating function for the $3nj$-symbols \cite{Schwinger} which Bargmann gave a more succinct presentation of in \cite{Bargmann}.  In 1975 Labarthe \cite{Labarthe:1975yf} developed a graphical method for computing the 3nj-symbol generating function for arbitrary graphs.  Then in the 1998 Westbury found a closed formula for the generating function of the chromatic evaluation on planar, trivalent graphs \cite{westbury} and shortly after by Schnetz \cite{Schnetz}.  Finally, and more recently, Garoufalidis \cite{Garoufalidis} proved the existence of the asymptotic limit while Costantino and Marche \cite{CM} solved the asymptotic evaluation and also generalized to the non-planar case, and also non-trivial holonomies.

These generating functions for amplitudes such as the Penrose evaluation (\ref{eqn_spin_net_eval}) are constructed via the variable transform $j_e \mapsto x_e$: 
\be\label{genf}
A\big(\{x_e\}\big) = \sum_{\{j_e\}}  A_\Gamma\big(\{j_e\}\big) \prod_e \frac{1}{j_e!} \, x_e^{j_e} \quad .
\ee

Let us note that (\ref{genf}) can be understood as partition function for a statistical model. Here a choice of $\{x_e\}$ amounts to a choice of statistical weights. Indeed (\ref{genf}) can be understood as a special case of so--called intertwiner models discussed in \cite{bw}.  In this work we will propose another link to a statistical model, namely the Ising model.  To do this however, we will instead employ the generating function introduced in \cite{Freidel:2012ji}.

In the next section we will shortly introduce a specific basis of $SU(2)$ intertwiners, which we will use 
to define the generating function. This section will also review the rewriting of the generating function as a sum over loop configurations $L$ obtained in \cite{Freidel:2012ji}.  This is in some sense an extension of the result of Westbury \cite{westbury} from planar trivalent graphs to arbitrary graphs, but for a slightly different evaluation, essentially differing by an overall sign.  This difference, however, has a significant effect on the generating function and is ultimately what allows us to treat higher valent nodes, in particular the square lattice.  

In hindsight, one could in fact relate Westbury's result directly the Ising model on a honeycomb lattice since it is planar and trivalent.

Section \ref{isingm} will define the partition function of the Ising model on the square lattice and give its formulations in terms of closed subgraphs $\Gamma_{\text{even}}$ with even--valent vertices. 

In the next section \ref{match} we show that for a specific choice of variables in the generating function  the loop configurations $L$ and the configurations $\Gamma_{\text{even}}$ can be matched to each other. This allows to evaluate the generating function for this specific choice of variables in terms of the partition function of the Ising model. We close with a discussion in section \ref{discuss}.

\section{Spin Network Generating Functions and the Ising Model}

In this section we review the construction of spin network generating functions as was done in \cite{Freidel:2012ji} by summing over the contraction of SU(2) intertwiners in the holomorphic representation.  In the last subsection \ref{isingm} we give a standard derivation of the high temperature expansion of the 2d Ising model.  The similarity between the two formulations both being expressed in terms of loops on the lattice should be apparent.

\subsection{Intertwiners}

First, we define a representation of SU(2) on the Bargmann-Fock space \cite{Bargmann} of holomorphic functions on spinor space $\C^2$.  This space is endowed with the Hermitian inner product
\be \label{barg_in_prod}
  \big\langle f \big| g \big\rangle = \int_{\C^2} \overline{f(z)} g(z) \rd\mu(z)
\ee
where $\rd\mu(z) = \pi^{-2} e^{-\bra z | z \ket}  \rd^{4}z$ and $\rd^{4}z$ is the Lebesgue measure on $\C^2$.
We use the notation
 $$|z\ket \equiv (\alpha, \beta )^t, \qquad |z] \equiv ( -\overline{\beta}, \overline{\alpha} )^t$$
 and $\check{z}$ to denote the conjugate spinor $|\check{z}\ket \equiv |z]$.  We use the bra-ket notation for the scalar product (\ref{barg_in_prod}) of the two states $|f\ket, |g\ket$ and a round bracket to denote $ f(z) \equiv (z|f\ket$.

The group SU(2) acts irreducibly on representations of spin $j$ given by the $2j+1$ dimensional subspaces $V^j$ of holomorphic functions homogeneous of degree $2j$.  Given a set of $n$ spins $\{j_i\}$ the space of intertwiners is defined to be
\be \label{eqn_inter_space}
  \cH_{j_1,...,j_n} \equiv \text{Inv}_{\text{SU}(2)}\left[V^{j_1} \otimes \cdots \otimes V^{j_n} \right].
\ee

In the representation space (\ref{barg_in_prod}) these are precisely the holomorphic functions of $n$ spinors $z_{1},... ,z_{n}$ which are SU(2) invariant and homogeneous of degree $2j_{i}$ in $z_{i}$ for $i=1,..,n$.  Holomorphic functions of different degree are orthogonal with respect to (\ref{barg_in_prod}) so we have
\be
  \cH_n = \bigoplus_{\{j_i\}} \cH_{j_1,...,j_n}
\ee
where $\cH_n$ is the Hilbert space of SU(2) invariant functions on $L^2(\C^{2n},\rd \mu)$.
An (overcomplete) basis of $\cH_n$ is given by monomials in the holomorphic invariants 
\be\label{C}
  \big( \{z_{i}\} \big| \{k_{ij}\} \big\rangle  \equiv \prod_{i<j}\frac{ [z_{i}|z_{j}\ket^{k_{ij}}}{k_{ij}!}
\ee
where $\{k_{ij}\}_{1\leq i,j \leq n}$ are non-negative integers and $k_{ij} = k_{ji}$.  
If $\{k\}$ satisfy the $n$ homogeneity conditions 
\be\label{kj}
\sum_{j\neq i} k_{ij} =2j_{i}.
\ee
and the sum of the spins $J \equiv \sum_i j_i = \sum_{i<j} k_{ij}$ is a positive integer then (\ref{C}) is an element of $\cH_{j_1,...,j_n}$.  The identity on ${\cal H}_{j_1,...,j_n}$ is resolved as follows
\be \label{eqn_res_id_disc}
  \one_{{\cal H}_{j_1,...,j_n}} = \sum_{\{k\}\in K_{j}} \frac{\big| \{k_{ij}\} \big\rangle \big\langle \{k_{ij}\} \big| }{\big\|\{k_{ij}\}\big\|^{2}}, \qquad \big\|\{k_{ij}\}\big\|^{2} \equiv \frac{ (J+1)!}{\prod_{i<j}k_{ij}!}.
\ee
with the set $K_j$ defined by (\ref{kj}).  For more information about this basis see \cite{Freidel:2013fia}.

The contraction of a set of intertwiners in the holomorphic representation is an operation on one or more intertwiners which arises from the natural pairing given by the scalar product (\ref{barg_in_prod}).  In general the result is another intertwiner, or when the legs of the intertwiners form a closed graph we obtain an amplitude.  If any pair of identified legs have different spins then the contraction vanishes identically.

Let $\Gamma$ be a simple\footnote{We will restrict ourselves to simple graphs here since this will be sufficient to describe a lattice.  This will allow us to label pairs of edges without a vertex label which will simplify the notation somewhat.  However, this is not a required assumption.}, closed, directed graph with edges $e$ and vertices $v$.  Assign an intertwiner $| \{k_{ee'}\} \ket \in \cH_{j_1,...,j_n}$ to each vertex where $k_{ee'}$ is defined for each pair $(e,e')$ meeting at $v$.  The contraction of the intertwiners with the connectivity given by the graph $\Gamma$ defines the amplitude
\be \label{eqn_sn_amp}
  A_\Gamma\big(\{k_{ee'}\}\big) \equiv \int \prod_{e \in \Gamma}  \rd\mu(z_{e}) \prod_{v \in \Gamma} \big( \{z^{v}_{e}\} \big| \{k_{ee'}\} \big\rangle 
\ee
where according to the edge directions we define
$$
  z^{v}_{e} \equiv \begin{cases} z_e, \quad \text{if $s_e = v$}, \\ \check{z}_{e}, \quad \text{if $t_e=v$.} \end{cases}
$$
where $s_e$, $t_e$ are the source and target vertices of the edge $e$.  

This amplitude is identically zero unless the spins of each of the contracted intertwiners are equal.  Thus by summing over all spins and contracting one can construct a generating function for the spin network amplitudes (\ref{eqn_sn_amp}) as was done in \cite{Freidel:2012ji}.  We will review this in the next section.

\subsection{Spin Network Generating functions}

As was done in \cite{Freidel:2012ji} we introduce the following generating function at a fixed vertex $v$ for the intertwiner basis (\ref{C}) which depends holomorphically on $n$ spinors $z_{e}$ for each edge and $n(n-1)/2$ complex numbers $\tau_{ee'} =-\tau_{e'e}$
\be\label{defC}
 {\cal C}_{\{\tau_{ee'}\}}\big(\{z_{e}\}\big)  \equiv e^{\sum_{e<e'} \tau_{ee'}[z_{e}|z_{e'}\ket } = \sum_{\{k\} }  \big( \{z_{e}\} \big| \{k_{ee'}\} \big\rangle \prod_{e<e'} (\tau_{ee'})^{k_{ee'}}.
\ee
Note that we have assumed an ordering of the edges $e<e'$ at the vertex.

Now given a closed, simple, directed, finite graph $\Gamma$ we attach an intertwiner generating function (\ref{defC}) to each vertex and integrate over the spinors with the measure (\ref{barg_in_prod}).  This will give us a generating generating function for the spin network amplitudes for the contraction of the basis states (\ref{C}).

To be more precise, we choose the convention that if two vertices are connected by an edge $e$ then the vertex with the outgoing direction is assigned $z_{e}$ while the vertex with incoming direction is assigned $\check{z}_e$ which is defined by $|\check{z}_{e}\ket = |z_e]$ as in (\ref{eqn_sn_amp}).  To each pair of edges we assign a complex number $\tau_{ee'}=-\tau_{e'e}$ and we define
\be\label{defG}
\G\big(\{\tau_{ee'}\}\big) \equiv  \int \prod_{e\in \Gamma} \rd\mu(z_{e}) \prod_{v\in \Gamma} e^{\sum_{(e,e') \supset v} \tau_{ee'}[z^{v}_{e}|z^{v}_{e'}\ket } 
 = \sum_{\{k\}} A_\Gamma\big(\{k_{ee'}\}\big) \prod_{v} \prod_{(e,e') \supset v} (\tau_{ee'})^{k_{ee'}} 
\ee
where the integral is over one spinor per edge of $\Gamma$ and the sum is over pairs of edges $(e,e')$ meeting at $v$.  Note that an ordering of the edges at each vertex is assumed since each pair $(e,e')$ is associated to $\tau_{ee'}$.  In what follows, for planar graphs, we will use either a cyclic or acyclic ordering at each vertex, meaning there is a reference edge and edges are ordered either clockwise or counterclockwise from that reference.

Since $\G(\tau)$ is expressed as a Gaussian integral we can perform these integrations which results in the determinant of a matrix with $\tau_{ee'}$ for elements.  Furthermore, as was shown in \cite{Freidel:2012ji} this determinant can be evaluated as a sum of terms, which are in one to one correspondence with certain loops of $\Gamma$ which we now define:

\begin{definition}
A loop of $\Gamma$ is a sequence of edges $l= (e_{1}, \cdots, e_{n})$ with a cyclic ordering such that $ t_{e_{i}}= s_{e_{i+1}}$ and $ t_{e_{n}}= s_{e_{1}}$.  We will identify loops which just differ by a permutation of edges leading to the same cyclic ordering of the edges, as well as loops just differing in the two possible directions.  A simple loop of $\Gamma$ is a loop of $\Gamma$ in which each edge enters at most once.  We say two
simple loops are disjoint if they have no edges in common.
\end{definition}


Note that a simple loop can intersect itself at vertices and disjoint simple loops can also intersect at vertices.  This will be important for graphs with valence greater than four since there are various ways in which paths can cross.  (See Figure \ref{fig_simpleloops}).  Finally, we state a key result of \cite{Freidel:2012ji} in which the evaluation of  (\ref{defG})is expressed as a sum over all collections of disjoint simple loops.  For a proof see \cite{Freidel:2012ji}.

\begin{theorem} \label{thm_gen}
The generating function (\ref{defG}) has the evaluation
\be  \label{eqn_gen_loops}
  \G(\tau) = \frac1{\left(1 + \sum_{L} A_{L}(\tau)\right)^{2}}
\ee
where the sum is over all collections of disjoint simple loops of $\Gamma$.  For each collection $L = \{\ell_1,...,\ell_k\}$ we define $A_{L}(\tau) = A_{\ell_{1}}(\tau)\cdots A_{\ell_{k}}(\tau)$ where for each simple loop $\ell_i = \{e_{1}, \cdots ,e_{n}\}$ we define the quantity
\be \label{eqn_A_loop}
A_{\ell}(\tau) \equiv -(-1)^{|e|} \tau_{e_{1}^{-1} e_2} \tau_{e_{2}^{-1} e_3} \cdots \tau_{e_{n}^{-1} e_1}
\ee
where $|e|$ is the number of edges of $\ell$ whose orientation agrees with the chosen orientation of  $\Gamma$.
\end{theorem}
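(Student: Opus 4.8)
The plan is to evaluate the Gaussian integral (\ref{defG}) explicitly and then expand the resulting determinant combinatorially. First I would make the exponent fully explicit as a quadratic form in the spinor components $z_e^A,\bar z_e^A$ ($A=0,1$). Writing $[u|w\ket=\eps_{AB}u^Aw^B$ and using the identities $[\check z_e|\check z_{e'}\ket=\overline{[z_e|z_{e'}\ket}$, $[\check z_e|z_{e'}\ket=-\la z_e|z_{e'}\ket$ and $[z_e|\check z_{e'}\ket=\la z_{e'}|z_e\ket$, every term $\tau_{ee'}[z^v_e|z^v_{e'}\ket$ falls into one of three classes according to the orientations of $e,e'$ at the shared vertex $v$: a holomorphic piece $\tau_{ee'}\eps_{AB}z_e^Az_{e'}^B$, its antiholomorphic conjugate, and a mixed piece $\mp\delta_{AB}\bar z_e^A z_{e'}^B$. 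Combined with the Gaussian weight $-\sum_e\la z_e|z_e\ket$, the total exponent takes the schematic form $-z^\dagger Hz+\tfrac12 z^tBz+\tfrac12\bar z^t\bar B\bar z$ on the edge$\,\otimes\,$spinor space $\C^{E}\otimes\C^2$, where the holomorphic block factorizes as $B=\mathcal T\otimes\eps$ (with $\mathcal T_{ee'}=\tau_{ee'}$ antisymmetric on $\C^{E}$ and $\eps$ the antisymmetric symbol on $\C^2$), while the mass-plus-mixed block is diagonal on the spinor index, $H=G\otimes\id$, with $G$ an $|E|\times|E|$ matrix built from $\id$ and the mixed edge couplings.

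Next I would carry out the (formal) complex Gaussian integration. The structural input is that the spinor factor enters $B$ through $\eps$ and $H$ through $\id$, together with $\eps^2=-\id$ on $\C^2$. I expect that, applying the standard complex-Gaussian/Bargmann evaluation, the $2|E|$-dimensional determinant collapses to a power of an $|E|$-dimensional determinant on the edge space $\C^{E}$, the two spinor components producing an overall square. Concretely, the target is $\G(\tau)=\big[\det(\id-T)\big]^{-2}$ for an effective transfer matrix $T$ on $\C^{E}$ whose entries are $T_{ee'}=\pm\tau_{e^{-1}e'}$ when $e$ and $e'$ are consecutive at a common vertex (i.e.\ $t_e=s_{e'}$ up to reversal) and $0$ otherwise; the restriction to a simple graph ensures that such a pair determines $\tau_{ee'}$ unambiguously.

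Finally I would expand $\det(\id-T)$ by the Leibniz formula, grouping permutations into disjoint cycles (partial permutations). A fixed point contributes the factor $1$ coming from the $\id$, while each nontrivial cycle $e_1\to e_2\to\cdots\to e_n\to e_1$ is exactly a simple loop $\ell$ of $\Gamma$ in the sense defined above, and edge-disjoint cycles are precisely collections of disjoint simple loops (note that loops may still meet at vertices, which is allowed since the index set is the edges, not the vertices). The product of entries along a cycle reproduces $\tau_{e_1^{-1}e_2}\tau_{e_2^{-1}e_3}\cdots\tau_{e_n^{-1}e_1}$, and I would verify that the Leibniz cycle sign $(-1)^{n-1}$ together with the orientation signs carried by the $T_{ee'}$ assembles into the prefactor $-(-1)^{|e|}$ of (\ref{eqn_A_loop}). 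Summing over all partial permutations then yields $\det(\id-T)=1+\sum_{L}A_{L}(\tau)$, and hence (\ref{eqn_gen_loops}).

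The main obstacle is concentrated in the middle step and is twofold. First, establishing that the Gaussian integral produces an exact perfect square, rather than a single determinant times a leftover factor, requires the mixed $\eps\otimes\id$ structure to reduce cleanly; this is delicate because $B$ (carrying $\eps$) and $H$ (carrying $\id$) do not share the same spinor structure, so the doubled quadratic form must be reduced with care. Second, the sign bookkeeping — reconciling the cycle signs from the determinant, the antisymmetry $\tau_{ee'}=-\tau_{e'e}$, and the edge-reversal conventions encoded in $e^{-1}$ — must be tracked uniformly around each loop in order to land on $-(-1)^{|e|}$. Throughout I would treat the $\tau_{ee'}$ as formal variables, so that (\ref{eqn_gen_loops}) is understood as an identity of formal power series and no convergence of the Gaussian integral need be invoked.
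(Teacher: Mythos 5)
A preliminary remark on the comparison itself: the paper never proves Theorem \ref{thm_gen} --- it states the result and explicitly defers to \cite{Freidel:2012ji}, sketching only the route: perform the Gaussian integral (\ref{defG}) to obtain an inverse determinant with entries built from the $\tau_{ee'}$, then expand that determinant into loop contributions. Your proposal follows exactly this route, your spinor identities (e.g.\ $[\check z_e|z_{e'}\ra=-\la z_e|z_{e'}\ra$) are correct, and the concluding Leibniz-expansion step is the right final move. So the strategy matches the intended proof.

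However, the step you yourself flag as the ``main obstacle'' is not merely delicate; as formulated it fails, and repairing it is essentially the whole content of the theorem. Your transfer matrix $T$ is indexed by \emph{unoriented} edges, with $T_{ee'}\neq 0$ whenever $e,e'$ are consecutive at a common vertex. Such a matrix cannot record the direction in which a loop traverses each edge. The only couplings with a natural row-to-column reading are the mixed (sesquilinear) ones $\bar z_e z_{e'}$, and these occur precisely when $e$ is incoming and $e'$ outgoing at the shared vertex; chains of them generate only walks that follow the orientation of $\Gamma$. Loops traversing edges against their orientation --- the generic case, and the reason the sign $(-1)^{|e|}$ exists at all --- come from the holomorphic ($\eps$) and antiholomorphic blocks, which couple the $z$-copy of edge space to itself and so cannot be folded into a single $E\times E$ matrix acting on $\C^E$. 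Concrete test: orient a triangle as $A\to B$, $C\to B$, $C\to A$. Then there is exactly one mixed coupling, your $T$ has one nonzero entry and $\det(\id-T)=1$, whereas iterating the Bargmann identity $\int \rd\mu(z)\,e^{\la u|z\ra+\la z|v\ra}=e^{\la u|v\ra}$ gives $\G=(1-\tau_{12}\tau_{23}\tau_{31})^{-2}$, a nontrivial loop term (consistent with the theorem, $|e|=2$). The correct arena is the $2E$-dimensional space of directed edges (the $z\oplus\bar z$ doubling), and there two further problems arise that your plan never addresses: every simple loop appears twice (once per traversal direction), and the cycle expansion contains spurious terms corresponding to no loop of $\Gamma$ (cycles using both directions of one edge; in your $E\times E$ version, even ``claw'' cycles through three edges meeting at a single vertex). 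One must prove that the spurious terms cancel and that the determinant is a perfect square whose square root is $1+\sum_L A_L$ --- this perfect-square/cancellation argument, resting on the quaternionic structure $z\mapsto\check z$, $\check{\check z}=-z$, is precisely what \cite{Freidel:2012ji} supplies and what your proposal assumes, along with the deferred verification of the prefactor $-(-1)^{|e|}$. In short: right strategy, but the two decisive steps are asserted rather than established, and the one intermediate claim you do commit to (the $E\times E$ reduction) is false as stated.
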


This representation of the spin network generating function is reminiscent of the high temperature expansion of the 2d Ising model partition function.  We will show in the next section that if we take $\Gamma$ to be a square lattice with a particular choice of weights and orientations, we can find an exact relationship.

\begin{figure} 
  \centering
    \includegraphics[width=0.7\textwidth]{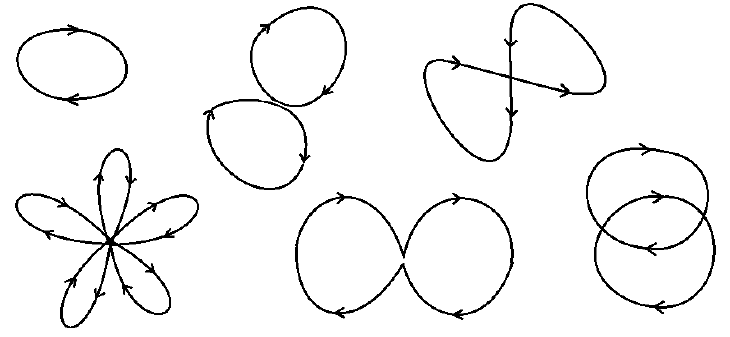}
    \caption{Some examples of paths on a graph which are collections of disjoint simple loops.  Notice that the middle three diagrams each have an intersection of four edges at one vertex, but but they follow different paths.}  \label{fig_simpleloops}
\end{figure}

\subsection{The 2d Ising Model}\label{isingm}

Following \cite{mussardo} we review how the 2d Ising model can also be formulated in terms of simple loops on a lattice.  The 2d Ising model on a square 2d lattice describes the possible configurations of spins placed on the lattice sites which can take one of two possible orientations.  The intuition of R. Peierls \cite{peierls} was that the possible states of this model are given by all possible loops on the dual lattice, which represent the boundary between domains of aligned spins.  The energy associated with the creation of such a domain is given by
$$
  \Delta E = 2JL
$$ 
where $J$ is a coupling constant and $L$ is the number links in the boundary of the domain.  The partition function of the Ising model on a square lattice $\cL_N$ of size $N \times N$at zero magnetic field with one coupling constant $J$ is
\be \label{eqn_ising_part}
  Z_N(v) = \sum_{\{\sigma\}} \exp \left(  \beta J \sum_{(i,j)} \sigma_i \sigma_j \right)
\ee
where for each vertex $i$ the spins are $\sigma_i = \pm 1$, and the sum in the exponent is over nearest neighbors.  Using the identity
$$
  \exp(x \sigma_i \sigma_j) = \cosh (x) ( 1- \sigma_i \sigma_j \tanh (x) )
$$
we get
\be
  Z_N(v) = \cosh^N( J) \sum_{\{\sigma\}} \prod_{(i,j)} (1 - \tanh( \beta J) \sigma_i \sigma_j)
\ee 
Expanding the product and defining $v \equiv \tanh \beta J$ 
\be \label{eqn_ising_loops}
  Z_N(v) = 2^N (1-v^2)^{-N} \left(1+\sum_{P \geq 4} g_P v^P \right)
\ee
where $g_P$ is the number of closed subgraphs $\Gamma^P_{\text{even}}$ of the lattice $\cL$ having a total of $P$ links and with an even number of edges adjacent to each vertex.  In what follows we define $\Gamma_{\text{even}}$ to be the set of such closed, even--valent subgraphs having an arbitrary number of links $P\geq 4$.  Here a closed subgraph can have disconnected components which share neither edges nor vertices.  For more details see \cite{mussardo}.

\begin{figure} 
  \centering
    \includegraphics[width=0.45\textwidth]{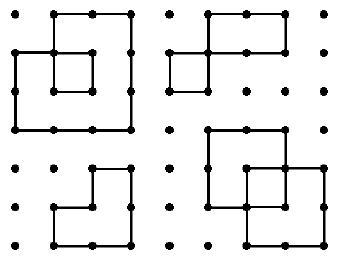}
    \caption{The closed even valent subgraphs of the square lattice correspond to domain boundaries in the 2d Ising model.}  \label{fig_polygon}
\end{figure}

\section{Matching of generating function and Ising model} \label{match}

So far we have reviewed the construction of the spin network generating function (\ref{eqn_gen_loops}) and the 2d Ising model partition function (\ref{eqn_ising_loops}), both in terms of sums of loops on a graph.  We now show that for a particular choice of the parameters $\tau_{ee'}$ and orientation in the spin network generating function we can produce 2d Ising model partition function exactly.

To see this, first note that the sum over collections of disjoint simple loops in Theorem \ref{thm_gen} on a square lattice contains all of these configurations of closed subgraphs $\Gamma_{\text{even}}$ in (\ref{eqn_ising_loops}), but also more due to the three possible ways in which two paths can cross at a four--valent vertex\footnote{See the middle three diagrams of Figure \ref{fig_simpleloops}.}.  Another difference is that there are signs in (\ref{eqn_gen_loops}) due to the edge orientation and the vertex ordering.  However, for a particular choice of edge orientation and vertex ordering of the square lattice, and a homogeneous choice of weights $\tau_{ee'} = i \sigma_{ee'} v$, with $\sigma_{ee'}$ being an antisymmetric function, the two sums are equal, as we now show.

\begin{theorem} \label{cor_gen}
Let $\cL$ be the square lattice with edge orientation and vertex ordering as in Figure \ref{fig_lattice}.  Let the vertex weights in (\ref{defG}) be given homogeneously by $\tau_{ee'} = iv$ for $e<e'$ and $\tau_{ee'} = -iv$ for $e>e'$.  Then the spin network generating function (\ref{defG}) takes the form
\be \label{eqn_G_L}
{\cal G}_{\cL}(iv) = \left(1+\sum_{P} g_P v^P \right)^{-2}
\ee
where the sum is over all even--valent, closed subgraphs of $\cL$ as in (\ref{eqn_ising_loops}).
\end{theorem}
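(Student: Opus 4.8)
The plan is to reduce the statement to a single scalar identity and then match it term by term against the high-temperature expansion (\ref{eqn_ising_loops}). By Theorem~\ref{thm_gen} we already know ${\cal G}_{\cL}(iv) = \left(1+\sum_L A_L(iv)\right)^{-2}$, with the sum over all collections $L$ of disjoint simple loops of $\cL$, so the desired formula (\ref{eqn_G_L}) is equivalent to
\be
\sum_L A_L(iv) = \sum_P g_P\, v^P ,
\ee
the right-hand side being exactly the bracket in (\ref{eqn_ising_loops}), i.e. $\sum_{S\in\Gamma_{\text{even}}} v^{|S|}$ with each even-valent closed subgraph $S$ weighted once. I would prove this by organizing the loop sum according to the edge set it covers.

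First I would introduce the forgetful map $\Phi$ sending a collection $L$ to its underlying set of edges, and show its image is precisely $\Gamma_{\text{even}}$. Any disjoint union of simple loops uses an even number of edges at each vertex, since every passage through a vertex consumes two incident edges; hence $\Phi(L)$ is closed and even--valent. Conversely every $S\in\Gamma_{\text{even}}$ admits at least one decomposition into disjoint simple loops, so $\Phi$ is onto. The fibre $\Phi^{-1}(S)$ is finite and, because $\cL$ has valence at most four, the only freedom lies at the four--valent vertices of $S$, where the incident edges may be paired into through--strands in the three ways shown in the middle diagrams of Figure~\ref{fig_simpleloops}. I would also record that $\cL$ is bipartite, so every simple loop has even length $|\ell|$; consequently each factor $\tau_{e_k^{-1}e_{k+1}}=\pm iv$ occurs an even number of times along a loop, $i^{|\ell|}=(-1)^{|\ell|/2}$ is real, and every $A_L$ with $\Phi(L)=S$ carries the common factor $v^{P}$ with $P=|S|$.

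The crux is the fibre identity $\sum_{L\in\Phi^{-1}(S)} A_L(iv) = v^P$ for each $S$; since every term already carries $v^P$, this amounts to showing that the accompanying signs sum to $+1$, and I would split it into two local claims. (i) When all vertices of $S$ have degree two, $\Phi^{-1}(S)$ is a single collection, and per loop the claim $A_\ell(iv)=v^{|\ell|}$ is the assertion $-(-1)^{|e|}\,i^{|\ell|}\prod_k \sigma_{e_k^{-1}e_{k+1}} = +1$, where $|e|$ is as in Theorem~\ref{thm_gen} and $\sigma_{ee'}=\pm1$ encodes the fixed vertex ordering; I expect this to follow from a turning--number (Whitney) argument, whereby the product of local turn signs of a simple planar loop is pinned by its rotation number and the factor $i^{|\ell|}$ compensates exactly. (ii) At each four--valent vertex the three pairings must be summed: each pairing contributes two corner weights with product $\tau^2=-v^2$ and, crucially, can merge or split loops, changing the number $k$ of loops and hence the number of leading $(-1)$ factors in $A_L=\prod_i A_{\ell_i}$. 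I would verify by one explicit computation on the three diagrams that the resulting three--term signed sum collapses the overcounting to coefficient $+1$; the imaginary weight is indispensable here, since $i^2=-1$ is what turns the extra crossing resolution into a cancellation rather than an addition.

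Granting the fibre identity, summing over $S\in\Gamma_{\text{even}}$ gives $\sum_L A_L(iv)=\sum_P g_P v^P$, and substituting into Theorem~\ref{thm_gen} yields (\ref{eqn_G_L}). The main obstacle is precisely steps (i)--(ii): controlling, uniformly across the lattice, the interplay of the per--loop orientation sign $-(-1)^{|e|}$, the ordering signs $\sigma_{ee'}$, the bipartite factor $i^{|\ell|}$, and the change in loop count under crossing resolution. This is the Kac--Ward sign mechanism for the two--dimensional Ising model recast in the intertwiner language, and the homogeneous choice $\tau_{ee'}=\pm iv$ is exactly what implements it.
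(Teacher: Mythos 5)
Your reduction to the identity $\sum_L A_L(iv)=\sum_P g_P v^P$ and the organization of the loop sum by the forgetful map $\Phi$ onto $\Gamma_{\text{even}}$ (with fibres given by the three pairings at each four--valent vertex) is exactly the right skeleton, and it is the same skeleton the paper uses. But the two claims (i) and (ii) that you defer are not peripheral verifications --- they are the entire proof, and as stated they do not combine to give the fibre identity. Concretely: claim (ii) cannot be settled by ``one explicit computation on the three diagrams,'' because the sign of each of the three terms $A_S$, $A_T$, $A_U$ is not a local quantity at the vertex; it depends on whether the re-pairing merges two loops into one or splits one loop into two, i.e.\ on how the four strands are connected through the rest of the lattice. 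The paper's Lemma \ref{lemma_STU} needs a case analysis of the three global reconnection patterns (its cases (a), (b), (c)), together with the path--reversal identity $A_{\cal P}=-A_{{\cal P}^{-1}}$ and the factorized form $\tau_{ee'}=\sigma_{ee'}\tau_e\tau_{e'}$, to prove $A_S=A_T=-A_U$; and in case (c) the surviving term is a self--intersecting (non--crossing) loop, so a further planarity argument is required to show that resolving all crossings leaves exactly \emph{one} configuration per subgraph $S$ --- without it one overcounts, since e.g.\ two loops sharing two vertices admit two distinct non--crossing decompositions of the same edge set.

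Second, your claim (i) is stated only for subgraphs $S$ all of whose vertices have degree two, while the surviving terms after the cancellation in (ii) generically contain four--valent vertices (loops touching themselves or each other without crossing). So even granting (i) and (ii) exactly as you state them, the sign of the surviving term is undetermined. The paper's Lemma \ref{lemma_orient}(2) is proved precisely for loops with non--crossing self--intersections --- by an induction on plaquettes that explicitly tracks such touching configurations --- and that stronger form is what is needed. Moreover (i) itself is left as ``I expect this to follow from a turning--number argument'': plausible in spirit (this is Kac--Ward), but the signs here come from a fixed discrete vertex ordering $\sigma_{ee'}$ and an alternating edge orientation, not from turning angles, so relating $\prod\sigma_{ee'}$ to a rotation number is itself a lemma requiring proof; the paper avoids this by proving the parity statement directly by induction. (A small slip along the way: the two corner weights of a pairing are $\pm iv$ each, with ordering--dependent signs, so their product is $\pm v^2$, not always $-v^2$.) In short, you have identified the right statements and the right mechanism, but the deferred steps are exactly where the paper's Lemmas \ref{lemma_orient} and \ref{lemma_STU} do all the work, and the versions you propose are both incomplete and, in the case of (i), too narrow in scope to close the argument.
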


\begin{figure} 
  \centering
    \includegraphics[width=0.7\textwidth]{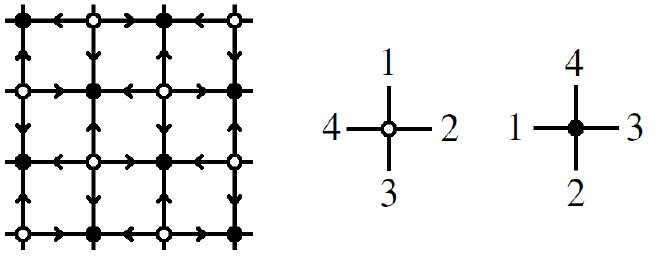}
    \caption{The edge orientation and vertex ordering of a square lattice for which the terms in (\ref{eqn_gen_loops}) all have a positive sign as shown in Theorem \ref{cor_gen}.}  \label{fig_lattice}
\end{figure}

We will prove this theorem by a series of lemmas.  The first step is to control the signs in (\ref{eqn_gen_loops}) which is accomplished by the specific edge direction and vertex ordering in Figure \ref{fig_lattice}.  We say that a vertex $v$ in a loop disagrees with the vertex ordering, if the loop traverses first the edge $e$ and then the edge $e'$ adjacent to $v$ and $e' < e$.  Furthermore a loop without crossing is a loop which may have self intersections (i.e. four edges of the loop meet at one vertex), however the edges are traversed without leading to crossing edge pairs. 

\begin{lemma} \label{lemma_orient}
Let $\cL$ be the lattice in Figure \ref{fig_lattice} with the indicated edge orientation and vertex ordering.  Then 
\begin{enumerate}
\item the number of edges in a loop which agrees with the orientation of $\cL$ is equal to half the number of edges in the loop
\item the number of vertices in a loop without self--crossing, which disagrees with the vertex ordering is odd.
\end{enumerate}
\end{lemma}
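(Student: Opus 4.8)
The plan is to establish both claims by a careful local analysis of the lattice in Figure \ref{fig_lattice}, exploiting the particular structure of its edge orientation and vertex ordering. For claim (1), I would first fix the specific convention of Figure \ref{fig_lattice}: the orientations are assigned so that, for instance, horizontal edges alternate direction row by row and vertical edges alternate column by column (a ``brick-wall'' or checkerboard pattern of arrows). The key observation is that as a loop traverses a sequence of edges, consecutive steps alternate between agreeing and disagreeing with the ambient orientation of $\cL$. I would make this precise by showing that at each vertex the loop enters along one edge and exits along another, and that the prescribed orientation pattern forces the ``agree/disagree'' label to flip each time the loop advances by one edge. Since a loop is a closed cyclic sequence of edges, the total number of edges must be even and exactly half of them agree with the orientation; I would phrase this as a two-coloring argument on the edges of the loop.

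For claim (2), I would translate the statement ``a vertex disagrees with the vertex ordering'' into a local comparison at each vertex the loop passes through: the loop traverses an edge $e$ into $v$ and an edge $e'$ out of $v$, and $v$ is a disagreement exactly when $e' < e$ in the chosen ordering. Because the loop has no self-crossing, at each visited vertex the two loop-edges make a well-defined corner (they do not form a transversal crossing pair), so the local picture is unambiguous. I would then analyze how the ordering labels ($<$) are distributed around each four-valent vertex in Figure \ref{fig_lattice} and show that the parity of disagreements accumulated around the loop is governed by a global consistency condition: going once around a closed loop, the number of ``order-reversing'' corners must be odd. This is most cleanly seen by relating the disagreement count to a winding or turning argument — the loop bounds a region, and summing the local order-reversals around its boundary yields an odd total.

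The main obstacle I expect is claim (2), and specifically handling the self-intersecting (but non-crossing) loops, where four edges of the same loop meet at a single vertex. At such a vertex the loop passes through twice, and I must account for the contribution of both passages to the disagreement count without double-counting and without invoking the excluded crossing configuration. The cleanest route is probably to reduce a loop-without-crossing to an embedded (simple closed) loop by locally resolving each self-intersection into two non-crossing corners, argue that this resolution preserves the parity of disagreements, and then prove the odd-parity statement for genuinely embedded loops via the turning/winding argument above. I would verify the base case on the smallest loop (a single unit square of the lattice) by direct inspection using the labels in Figure \ref{fig_lattice}, and then show that enlarging a loop by adjoining one unit square changes both the agreement count and the disagreement parity in a compatible way, giving an inductive proof.
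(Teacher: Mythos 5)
Your overall strategy coincides with the paper's: part (1) is proved there by exactly the alternation observation you describe (consecutive edges of any loop alternate between agreeing and disagreeing with the lattice orientation, and every loop has even length), and part (2) is proved by a plaquette induction whose base case is the single unit square, with non-crossing self-intersections treated explicitly in the inductive step. The only organizational difference is that the paper inducts on the size of the lattice (adding one square at a time, distinguishing ``start a new row'' from ``extend an existing row'', and using two symmetries --- reversing the loop orientation and swapping black/white vertices --- to reduce the case analysis), whereas you induct on the region enclosed by a fixed loop; these are minor variants of the same idea. Your observation that resolving a non-crossing self-touch leaves the disagreement count unchanged (because the two corner pairs at that vertex are unchanged) is sound and, if anything, slightly cleaner than the paper's ``one can check'' treatment of those configurations.

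Two concrete cautions. First, the ``for instance'' orientation you posit --- horizontal edges alternating direction row by row, vertical edges column by column (a brick-wall pattern) --- does not have the flip property on which your part (1) argument rests: on a $2\times 1$ rectangle that pattern gives five agreeing edges out of six, so claim (1) would be false for it. Alternation of consecutive loop edges holds precisely when every vertex is a source or a sink, and that is the orientation of Figure \ref{fig_lattice} (all edges directed away from black vertices and into white ones, consistent with the black/white vertex structure the paper's proof invokes); your argument is correct once this is the orientation used, but as literally stated it commits to a pattern under which the lemma fails. Second, your preferred ``turning/winding'' route for part (2) is, as written, only a restatement of what is to be proved --- ``summing the local order-reversals around the boundary yields an odd total'' is the claim itself, and you never supply the link between the figure's vertex ordering and the turning number. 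So the only genuine argument in your proposal for part (2) is the fallback induction, which is exactly the paper's proof; to complete it you would still need to carry out the base-case check on the unit square and verify, using the explicit ordering of Figure \ref{fig_lattice}, that adjoining a square (including the cases where this creates a non-crossing self-intersection) changes the disagreement count by an even number.
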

\begin{proof}
For the first part, it is easy to see that the edges of every loop in $\cL$ alternates orientation and every loop has an even number of edges so the number of edges that agrees with the orientation is equal to half the number of edges in the loop.



For the second part, we will use induction on the number of plaquettes in the lattice.  To this end we will build up the lattice from the left most lower corner.  One can add squares so that the boundary on the right forms a staircase to reach an infinite lattice in the limit. A finite size lattice can be built row by row. We thus have two cases to consider: adding a square which starts a new row and adding a square to an existing row as is illustrated in Figure \ref{fig_lattice2}.
Notice furthermore that the ordering along a vertex is reversed if the loop is reversed, hence we need just to consider one specific loop orientation. Furthermore exchanging all black vertices with white ones and vice versa we also exchange all orientation induced signs, hence we again just need to consider one choice for the partitioning of the vertices into black and white.

One can check that the loop on a single square has an odd number of vertices which disagrees with the vertex ordering.  Assume that we have a square lattice for which every loop has an odd number of vertices which disagree.  
Consider adding a single square starting a new  row, as in the left panel of  Figure \ref{fig_lattice2}.
By the hypothesis all of the loops which contain $e_1$ have an odd number of vertices which disagree.  Traversing $e_1$ in any direction gives one vertex which disagrees.  On the other hand traversing the three edges in the new square clockwise gives three vertices which disagree (or one vertex that disagrees in the counter-clockwise direction).  Furthermore the new square might lead to a loop with a non--crossing self intersection at the black vertex $v_1$, shared by $e_1$. Here one can also check that for a counter--clockwise orientation of the loop a deformation of the loop to include the new square leads to four additional vertices that disagree.  
Hence all loops of the lattice with the new square also have an odd number of vertices which disagree.  

 Similarly, for adding a square to an existing row, one can check that traversing $e_2$, $e_3$ (or both) contributes the same parity as traversing the new square. Again one can also check that  loops with non--crossing self intersections at the black or white vertex of $e_3$, which include the new square, have an odd number of vertices disagreeing with the ordering. 
 
 Hence by induction the loops in a square lattice of any size will always have an odd number of vertices that disagrees with the ordering.
\end{proof}

\begin{figure} 
  \centering
    \includegraphics[width=0.55\textwidth]{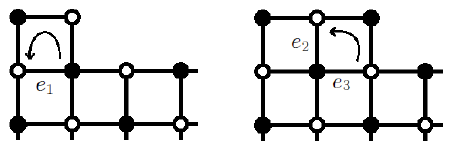}
    \caption{Adding one square to the lattice: Starting a new row and adding to an existing row.  Assuming all the loops in the existing lattice have an odd number of vertices which disagree with the edge ordering, then the loops containing the new square also have an odd number.}  \label{fig_lattice2}
\end{figure}

We have now to discuss the situation that at a given vertex either one loop self--intersects, or two loops touch or even cross each other. A priori all these cases are allowed to appear in the sum for the generating function (\ref{eqn_gen_loops}). This leads to three terms for such a vertex, as there are three possibilities for how two paths meet or cross at a four--valent vertex (see the middle three diagrams of Figure \ref{fig_simpleloops}). In the partition function of the Ising model (\ref{eqn_ising_loops}) only one term for  such a vertex appears. Hence we have to show that always two terms cancel each other, and that the surviving term does not lead to a loop with crossing.


\begin{lemma} \label{lemma_STU}
Consider the lattice $\cL$ and let 
\be
  \tau_{ee'} =\sigma_{ee'}\tau_{e}\tau_{e'}  
\ee
where $\sigma_{ee'} = 1$ if $e<e'$ and $\sigma_{ee'} = -1$ if $e>e'$ according to the vertex ordering.  

 Then the sum over collections of disjoint simple loops in (\ref{eqn_gen_loops}) is reduced to only collections without crossings.  Furthermore, there is a one--to--one matching between these terms and configurations $\Gamma_{\text{even}}$ of closed, even--valent subgraphs.  
\end{lemma}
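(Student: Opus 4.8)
The plan is to reduce the lemma to a single parity statement about loop collections, and then exploit the antisymmetry of $\sigma_{ee'}$ together with Lemma \ref{lemma_orient}. With the homogeneous choice $\tau_e\tau_{e'}=iv$ each factor in \eqref{eqn_A_loop} reads $\tau_{e_i^{-1}e_{i+1}}=\sigma_{e_i^{-1}e_{i+1}}\,iv$, so for a simple loop $\ell$ of length $n_\ell$ with $d_\ell$ disagreeing vertices one gets $A_\ell=-(-1)^{|e_\ell|}(-1)^{d_\ell}(iv)^{n_\ell}$. Using part 1 of Lemma \ref{lemma_orient}, $|e_\ell|=n_\ell/2$, the phase $(-1)^{|e_\ell|}(iv)^{n_\ell}$ collapses to $v^{n_\ell}$ and all powers of $i$ cancel. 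Writing a collection as $L=\{\ell_1,\dots,\ell_k\}$ with total edge number $E$ and total disagreement count $D=\sum_i d_{\ell_i}$, taking the product $A_L=\prod_i A_{\ell_i}$ leaves
\[
A_L=(-1)^{k}\,(-1)^{D}\,v^{E}.
\]
For a crossing-free collection, part 2 of Lemma \ref{lemma_orient} applies to each loop, giving $d_{\ell_i}$ odd and hence $D\equiv k\pmod 2$, so that $A_L=v^{E}$ is positive with exponent equal to the number of edges used — exactly the form of a term $g_P v^P$ in \eqref{eqn_ising_loops}. This disposes of all sign ambiguities in the crossing-free sector.

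Next I would organize the loop collections by their underlying edge set. Since every collection has even valence at each vertex, and conversely each even subgraph $G$ is the edge set of a collection precisely by selecting, at each four--valent vertex, one of the three local pairings of its four half--edges (two smoothings, one crossing), while degree--two vertices admit a unique pairing, the net coefficient of $v^{|E(G)|}$ in $\sum_L A_L$ is $\sum_{L:\,E(L)=G}(-1)^{k(L)+D(L)}$. The lemma is therefore equivalent to showing that this signed sum equals $1$ for every even--valent $G$.

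To evaluate it I would establish the key parity identity $k(L)+D(L)\equiv c(L)\pmod 2$, where $c(L)$ is the number of four--valent vertices resolved with a crossing; the crossing-free case above is the instance $c=0$. Granting this, $(-1)^{k+D}=(-1)^{c}$ factorizes over the four--valent vertices, and summing the three local choices at each such vertex yields $1+1-1=1$, so $\sum_{L:E(L)=G}(-1)^{k+D}=1$. Substituting back gives $\sum_L A_L=\sum_{G\in\Gamma_{\text{even}}}v^{|E(G)|}=\sum_P g_P v^P$, which is precisely what feeds \eqref{eqn_gen_loops} into \eqref{eqn_G_L}. Read geometrically, the crossing resolution at each vertex cancels one smoothing, leaving a single non--crossing representative per subgraph: this both reduces the sum to crossing-free collections and furnishes the one--to--one matching with $\Gamma_{\text{even}}$.

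The main obstacle is the parity identity itself, i.e.\ controlling how $k+D$ changes when a single four--valent vertex is switched between a smoothing and the crossing. Such a switch merges or splits a loop (changing $k$ by one) and re--pairs the two strands, which may reverse the orientation of an intervening sub--path and so alter the disagreement count $D$ along it. I would handle this by the same inductive, vertex--by--vertex bookkeeping used in Lemma \ref{lemma_orient}: because every loop has even length, reversing a closed loop preserves the parity of its disagreements, and one checks that the combined change in $k$, in the local disagreements at the switched vertex, and along any reversed sub--path is always odd, matching the unit change in $c$. Verifying this local sign flip in each of the vertex configurations of Figure \ref{fig_simpleloops}, and confirming that the surviving smoothing never reintroduces a crossing, is the delicate part; the rest is the bookkeeping already assembled above.
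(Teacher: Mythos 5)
Your reduction is set up correctly and, where it is complete, it agrees with the paper: the computation $A_L=(-1)^{k}(-1)^{D}v^{E}$ for homogeneous weights, the use of part 1 of Lemma \ref{lemma_orient} to kill the factors of $i$, the observation that part 2 forces $D\equiv k \pmod 2$ for crossing--free collections, and the bijection between collections with a fixed edge set $G$ and choices of local pairings at the four--valent vertices of $G$ are all sound. Your packaging of the cancellation as $\sum_{L:E(L)=G}(-1)^{c(L)}=\prod_{v}(1+1-1)=1$ is in fact a tidier bookkeeping than the paper's iterative ``cancel $A_U$ against $A_S$ or $A_T$'' scheme, and it sidesteps the paper's awkward discussion of case (c), where the surviving term still has a non--crossing self--intersection that must be resolved at further vertices.

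However, there is a genuine gap: you have compressed the entire content of the lemma into the parity identity $k(L)+D(L)\equiv c(L)\pmod 2$, and you never prove it --- you only assert that ``one checks that the combined change \ldots is always odd'' and defer it as ``the delicate part.'' That identity \emph{is} the theorem. In the paper it appears as the statement $A_S=A_T=-A_U$ for the three resolutions of a four--valent vertex, and proving it occupies the bulk of the proof: one must distinguish how the two strands through the vertex are connected away from it (the paper's cases (a), (b), (c) --- in particular case (c), where two distinct loops cross at the vertex and, by planarity, must meet again elsewhere), and one must track signs under reversal of \emph{open} sub--paths, for which $A_{{\cal P}}=-A_{{\cal P}^{-1}}$ (a net sign flip, unlike the sign--preserving reversal of closed loops that you invoke), together with the antisymmetry $\tau_{ee'}=\sigma_{ee'}\tau_e\tau_{e'}$. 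The point is that switching one vertex between a smoothing and the crossing changes $k$ by one \emph{and} reverses a sub--path, flipping the disagreement status of all its interior vertices as well as the two pairings at the switched vertex; whether the total change in $k+D$ is odd depends on exactly the case analysis you skipped, and Lemma \ref{lemma_orient} alone (which only covers crossing--free loops) cannot supply it. So your argument is a correct reduction plus a clean combinatorial finish, but the core computation that makes the signs work --- the only place where the specific lattice orientation, planarity, and the special form of the weights enter beyond the crossing--free sector --- is missing; filling it in would essentially reproduce the paper's case analysis.
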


\begin{proof}

Suppose we have a configuration $A_{L}(\tau)$ of disjoint simple loops, for which all four edges $e_1,\ldots,e_4$ adjacent to a vertex $v$ are shared by either one or two loops. The way the loop or the two loops traverses the four edges, leads to a partition of the  four edges into two pairs of consecutive edges in the loop(s).  There are three such possible pairings. The crossing case  $(1-3,2-4)$ and the two non--crossing cases $(1-2,3-4)$ and $(2-3,4-1)$. (Here the ordering of the edges inside a pair does not matter.)

Hence  there are also two other configurations, which include the same set of edges as $A_{L}(\tau)$, but differ by a certain rearrangement of the edges into loops, so that the other two pairings are obtained. This gives three configurations, which we will name $A_U$ for the crossing case,  $A_S$ for $(1-2,3-4)$  and $A_T$ for $(2-3,4-1)$.

To be concrete consider a black vertex, for white vertices one just has to invert the edges $e_1,\ldots e_4$ everywhere. Note that under a change of orientation of a simple loop we have $A_{\ell}=A^{-1}_{\ell}$ due to the anti--symmetry of the $\tau_{ee'}$ and the definition (\ref{eqn_A_loop}). Furthermore we can choose w.l.o.g.\ the initial vertex in any given loop. Hence we can assume that in the configuration $A_U$ we have a loop $\ell_{U1}$ of the form $\ell_{U1}=(e_3^{-1} P P' e_1)$ where $P$ and $P'$ stand for paths with the source vertex $s(P)$ given by $t(e_3^{-1})$ and the target vertex of $P'$ being $t(P')=s(e_1)$.

We now consider three possibilities for the end point of $P$.
\begin{itemize}\parskip-2mm
\item[(a)]  We have that the target vertex $t(P)=s(e_2)=s(P')$ with $P'=(e_2 e_4^{-1} p')$.
 \item[(b)] We have $t(P)=s(e_4)=s(P')$ with $P'=(e_4 e_2^{-1} p')$. 
\item[(c)] We have $t(P)=s(e_1)$.
In this case $P'$ is empty and there is a second loop $\ell_{U2}$ contributing to $A_U$ whose orientation and starting point we can choose such that $\ell_{U2}=( e_4^{-1} p'e_2)$ with $s(p')=t(e_4^{-1})$ and $t(p')=s(e_2)$. (The two loops intersect also elsewhere for a planar lattice.)
\end{itemize}

Let us  define the corresponding configurations $A_S$ and $A_T$ for the different cases. 
\begin{itemize}\parskip-2mm
\item[(a)] $A_S$ agrees with $A_U$ in all simple loops except for $\ell_{U1}$ which is replaced by $\ell_{S}=(e_2^{-1}P^{-1}e_3e_4^{-1}p'e_1)$. Likewise we replace for $A_T$ the loop  $\ell_{U1}$ by two loops $\ell_{T}=(e_4^{-1}p'e_1)(e_3^{-1}Pe_2)$.
\item[(b)] For $A_S$ we replace  $\ell_{U1}$ by a pair of loops $\ell_S=(e_{2}^{-1}p'e_1)(e_4^{-1}P^{-1}e_3)$ and for $A_T$ by a loop $\ell_T=(e_4^{-1}P^{-1}e_3e_2^{-1}p'e_1)$.
\item[(c)]  For $A_S$ we replace $\ell_{U1}\ell_{U2}$ by a loop $\ell_S=(e_2^{-1} (p')^{-1} e_4e_3^{-1}Pe_1)$ and for $A_T$ by a loop $\ell_T=(e_4^{-1}p'e_2e_3^{-1} Pe_1)$.
\end{itemize}

\begin{figure} 
  \centering
    \includegraphics[width=1\textwidth]{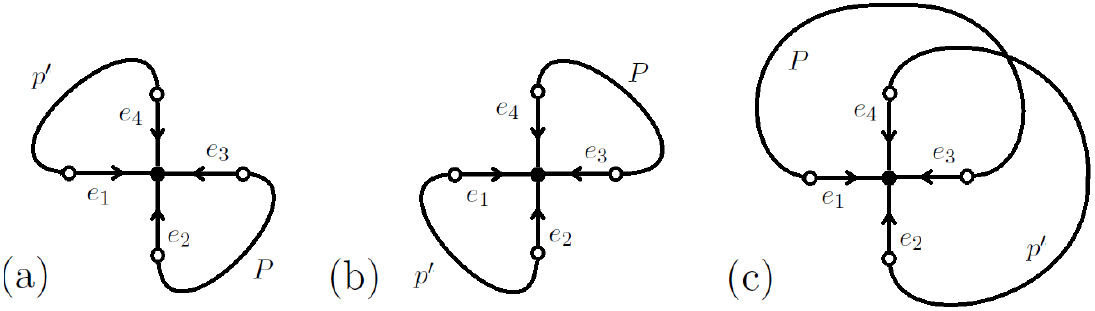}
    \caption{The three possible intersections at a 4-valent vertex.  For each intersection (a), (b), and (c) there are three possible configurations of simple loops $S$, $T$, and $U$.  The paths $P$ and $p'$ are arbitrary.}  \label{fig_abc}
\end{figure}

We have now to compare the corresponding amplitudes as defined in (\ref{eqn_A_loop}). To this end denote by 
\be
A_{{\cal P}} \,= \, (-)^{|{\cal P}|} \prod_{{ \text{bulk}} \,\, v} \tau_v 
\ee
 the contribution from an open path ${\cal P}$, where $|{\cal P}|$ is the number of edges disagreeing with the orientation of the path and $\tau_v$ stands for $\tau_{ee'}$ with $(e,e')$ a pair of edges in ${\cal P}$ adjacent to $v$ and ordered according to the orientation of $|{\cal P}|$. 
 
 Note that under a reversal of the orientation of ${\cal P}$ we have 
 \be
 A_{\cal P} \,=\, (-)A_{{\cal P}^{-1}} \quad .
 \ee
 The reason for this is that the change in sign due to the orientation of edges is given by $(-1)^{\sharp {\cal P}}$ where $\sharp {\cal P}$ is the number of edges in ${\cal P}$. Furthermore the change in sign due to the orientation of the vertices and the antisymmetry of the $\tau_{ee'}$ is given by $(-1)^{\sharp {\cal P}+1}$.
 
 We can now consider all three cases:\\
With $A_{\text{rest}}$ denoting the contribution of all other simple loops in $A_U$  we obtain for the case (a)
 \bea
 A_U&=& A_{\text{rest}}   (-) A_{(e_3^{-1}Pe_2)} \tau_{e_2 e_4^{-1}} A_{(e_4^{-1}p'e_1)} \tau_{e_1e_3^{-1}} \nn\\
 A_S&=&A_{\text{rest}}   (-) A_{(e_3^{-1}Pe_2)^{-1}} \tau_{e_3e_4^{-1}} A_{(e_4^{-1}p'e_1)} \tau_{e_1e_2^{-1}}\,=\,-A_U \nn\\
 A_T &=&A_{\text{rest}}   (-) A_{(e_3^{-1}Pe_2)} \tau_{e_2 e_3^{-1}} (-) A_{(e_4^{-1}p'e_1)} \tau_{e_1e_4^{-1}}\,=\,-A_U \quad .
 \eea
 Here we used the special form of the weights $\tau_{ee'}=\sigma_{ee'}\tau_{e}\tau_{e'}$ to reach $A_S=A_T=-A_U$.
 
 Likewise we also obtain for the other two cases (b) and (c) that $A_S=A_T=-A_U$.

  Thus for cases (a) and (b) we can  cancel in the sum $\sum_L A_L(\tau)$ the term with a crossing $A_U$ such that we remain with the contribution of two simple loops, i.e. for (a) we cancel $A_U$ with $A_S$ and for (b) we cancel $A_U$ with $A_T$. 
  
  In the case (c) we have to cancel $A_U$ with either $A_S$ or $A_T$ and we remain with a loop $\ell_S$ or $\ell_T$ with self--intersection (but non--crossing) at the vertex $v$ under consideration. 
  
  However, in the case of a planar lattice  the two loops $\ell_{U1}$ and $\ell_{U2}$ need to cross at least one other time at one or more other vertices $v',v'',\ldots$. Going to the next vertex, for instance $v'$, we can now resolve this crossing so that the loop is split into two loops. The self--intersection of $\ell_S$ or $\ell_T$ at $v$ then turns into two different loops sharing two vertices. 
  
  Doing this with all vertices we remain with loops which do not self--intersect. Different loops may share vertices. Counting all such configurations would still lead to an over--counting compared to the number of configurations of closed graphs $\Gamma_{\text{even}}$, as can be seen by an example of two loops sharing two vertices\footnote{See the lower right diagram in Figure \ref{fig_polygon}.}, for which there are two (if the loops are not crossing) possibilities involving the same set of edges. But in fact the proof shows that resolving all intersections leads always to just one configuration that remains in the end.  This leads to a matching of (left--over) loops configurations with configurations of closed, even--valent subgraphs $\Gamma_{\text{even}}$ for the Ising model.  
	
Remark: The fact that from the three possible terms $A_S,A_T,A_U$ two terms cancel out generalizes to arbitrary lattices. However to specify the crossing term $A_U$ one needs a planar vertex. Furthermore for i.e. six--valent vertices, three paths might meet at one vertex, in which case one has more terms to consider.	
\end{proof}


Now Theorem \ref{cor_gen} follows from lemmas \ref{lemma_orient} and \ref{lemma_STU}.  Indeed, from Lemma \ref{lemma_STU} the sum in (\ref{eqn_gen_loops}) is reduced to a sum of terms in one--to--one correspondence with the subgraphs $\Gamma_{\text{even}}$ and each term in the sum is a collection of disjoint simple loops having no crossings.  Suppose such a subgraph has $P$ edges then by lemma \ref{lemma_orient} the quantity (\ref{eqn_A_loop}) will have a sign $(-1)^{P/2}$ which is canceled by the factors of $i$ in the weight.


This gives us the following relation between the spin network generating function and the 2d Ising model partition function.
\be \label{eqn_G_Z}
  \cG_{\cL_N}(iv) = \frac{2^N}{ (1-v^2)^{N} Z_{N}(v)^2}
\ee
where $Z_N(v)$ is the partition function (\ref{eqn_ising_part}) of the 2d Ising model.  In particular, this shows that in the limit $N \rightarrow \infty$ the spin network generating function $\cG_{\cL_N}(iv)$ possesses a second order phase transition at $$v = \sqrt{2}-1$$

Indeed, it is known that the 2d Ising model undergoes a second order phase transition for a particular temperature, namely when $v = \sqrt{2}-1$.  The free energy of $Z_N(v)$ is defined by
\be
  F(T) = -kT \log Z_N(v)
\ee
and is exactly solvable for $N \rightarrow \infty$.  At the critical temperature the logarithm in $F(T)$ becomes singular and since
\be
  \log \cG_{\cL_N}(iv) = N\log 2 - N \log(1-v^2) -2 \log Z_N(v)
\ee
it follows that the logarithm of $\cG_{\cL_N}(iv)$ is also singular at this point.  Thus we have shown that the spin network generating function $\cG_{\cL_N}(iv)$ will undergo a second order phase transition at the critical value $v = \sqrt{2}-1$.

\section{Conclusion} \label{discuss}

We have shown that spin network generating functions can encode partition functions of statistical models, in this case, the Ising model. The solvability of the 2D Ising model allowed us to obtain an explicit expression for the spin network generating function -- for a specific choice of arguments.   

This shows that the rewriting of the spin network generating function as a sum over loops obtained in \cite{Freidel:2012ji}, see also \cite{Garoufalidis}, can help to understand properties of this generating function. As mentioned such generating functions can be understood as statistical models themselves, where the variables in the generating function encode the weights of the models. As such generating functions are related to intertwiner models considered in \cite{bw} for which choices of weights were identified which lead to topological models, that is a continuum limit without propagating degrees of freedom. Here we specified weights leading to a second order phase transition, i.e the continuum limit gives a theory with propagating degrees of freedom.

Here we started with an intertwiner model, which admits a geometrical interpretation (similar to spin foam models) of the underlying spin variables, due to the triangle inequalities satisfied by the spins meeting at a vertex, see also \cite{bw}. We have shown that for a specific choice of weights, i.e. choice of dynamics,  we obtain a continuum limit with propagating degrees of freedom. This model can therefore serve as a toy example, describing 2D geometries, for many conceptual questions about  coarse graining spin foam models, see also \cite{refining}. It will be in particular interesting to study the meaning of the macroscopic order parameters and correlation functions emerging from the Ising model description, in the original microscopic model. 

Apart from the spin network generating functions defining 2D partition functions, they also appear describing boundary states of 3D quantum gravity, more precisely for the Ponzano Regge model \cite{PR}. Although the Ponzano Regge model is a topological model, its partition function can be (holographically) dual to a boundary theory with propagating degrees of freedom \cite{BCL,BD}. To uncover this dual field theory it is necessary to take the continuum limit of the boundary discretization. The results in this work show that the Ponzano Regge model admits boundary states that lead to a non--trivial boundary field theory.

We hope that the techniques presented here can be extended to map to other known statistical models, also involving irregular lattices. Also lattices with different topology might be treatable if one introduces certain defects or special vertices. Even if these models are not exactly solvable, such maps would provide many tools to understand the properties of spin network generating functions. Such generating functions appear \cite{jefftoappear} if spin foam or spin net models are coarse grained \cite{nets,foams}.  For an interesting and different approach see \cite{Livine:2013gna}. The understanding of the coarse graining flow for spin foams, which aim at a description of quantum gravity, is a key problem for quantum gravity research \cite{alexreview}. This can be also very much understood as a problem of statistical physics, with models defined on regular lattices \cite{nets,bw}, as we are using here.  We hope therefore that the technique developed here will eventually help to understand the phase diagram for spin foams.

 We would like to note that during the final stages of this work, we became aware that similar results are being developed by Valentin Bonzom, Francesco
Costantino and Etera  Livine.  These results have appeared (after this work's appearance on arXiv) in \cite{BCL}.

\acknowledgments

We would like to thank Laurent Freidel, Etera Livine, and Miklos L\r{a}ngvik for helpful discussions and we also thank Laurent Freidel for encouragement to write this paper.
Research at Perimeter Institute is supported
by the Government of Canada through Industry Canada and by the Province of Ontario
through the Ministry of Research and Innovation.

\appendix

\end{document}